\newtheorem{lemma}{Lemma}
\newtheorem{proposition}{Proposition}
\begin{document}

\title{Pareto Frontier for the Performance-Complexity Trade-off in Beyond Diagonal\\Reconfigurable Intelligent Surfaces}

\author{Matteo~Nerini,~\IEEEmembership{Graduate Student Member,~IEEE},
        Bruno~Clerckx,~\IEEEmembership{Fellow,~IEEE}
        
\thanks{M. Nerini is with the Department of Electrical and Electronic Engineering, Imperial College London, London SW7 2AZ, U.K. (e-mail: m.nerini20@imperial.ac.uk).}
\thanks{B. Clerckx is with the Department of Electrical and Electronic Engineering, Imperial College London, London SW7 2AZ, U.K., and with Silicon Austria Labs (SAL), Graz A-8010, Austria (e-mail: b.clerckx@imperial.ac.uk).}}

\maketitle

\begin{abstract}
Reconfigurable intelligent surface (RIS) is an emerging technology allowing to control the propagation environment in wireless communications.
Recently, beyond diagonal RIS (BD-RIS) has been proposed to reach higher performance than conventional RIS, at the expense of higher circuit complexity.
Multiple BD-RIS architectures have been developed with the goal of reaching a favorable trade-off between performance and circuit complexity.
However, the fundamental limits of this trade-off are still unexplored.
In this paper, we fill this gap by deriving the expression of the Pareto frontier for the performance-complexity trade-off in BD-RIS.
Additionally, we characterize the optimal BD-RIS architectures reaching this Pareto frontier.
\end{abstract}

\glsresetall

\begin{IEEEkeywords}
Beyond diagonal reconfigurable intelligent surface (BD-RIS), Pareto frontier, performance-complexity trade-off.
\end{IEEEkeywords}

\section{Introduction}

Reconfigurable intelligent surface (RIS) has recently gained a lot of popularity as a technology able to make the propagation environment smart and reconfigurable in wireless networks \cite{bas19,wu19,wu21}.
A RIS is composed of a large number of electrically tunable reflective elements that can be controlled to provide a passive beamforming gain.
Due to its ultra-low power consumption, low profile, and low cost, RIS is expected to efficiently improve future wireless communications.

In a conventional RIS architecture, also known as single-connected, each element is independently controlled by a tunable impedance component \cite{she20}.
This results in conventional RIS having a diagonal scattering matrix, also commonly known as a phase shift matrix.
To improve the capabilities of RIS, beyond diagonal RIS (BD-RIS) has been proposed as a generalization of conventional RIS, in which the scattering matrix is not limited to being diagonal \cite{li23-1}.
The key novelty introduced in BD-RIS is the presence of tunable impedance components interconnecting the RIS elements, adding further flexibility to the RIS at the expense of additional circuit complexity.
The single-connected RIS architecture has been first generalized in \cite{she20}.
By interconnecting some or all the RIS elements to each other, group- and fully-connected RIS architectures have been proposed, respectively \cite{she20}.
Group- and fully-connected RISs have been globally optimized in closed form assuming continuous reflection coefficients in \cite{ner22}, while they have been optimized using discrete reflection coefficients in \cite{ner21}.
Besides, BD-RIS have been modeled using graph theory in \cite{ner23}, where BD-RIS architectures have been described through graphs capturing the presence of tunable impedance components between the RIS elements.
Two low-complexity BD-RIS architectures have been proposed in \cite{ner23}, namely forest- and tree-connected RISs.

BD-RIS has been studied in several contexts showing significant performance gains over conventional RIS.
In \cite{li22-1}, a BD-RIS model has been developed unifying different BD-RIS working modes and different BD-RIS architectures.
In \cite{li22-2}, multi-sector BD-RIS has been introduced to efficiently enable full-space coverage.
Non-diagonal RIS \cite{li22} and dynamically group-connected RIS \cite{li22-3} have been proposed to outperform conventional RIS and group-connected RIS, respectively, thanks to their dynamic interconnections reconfigured on a per channel realization basis.
Additionally, BD-RIS has proved to enlarge the coverage and improve the sum rate in \gls{rsma} systems \cite{fan22,li23-2}, and to improve communication capacity and sensing precision in \gls{dfrc} systems \cite{wan23}.

When designing new BD-RIS architectures, the critical issue is the trade-off between performance and circuit complexity, given by the number of tunable impedance components in the BD-RIS architecture \cite{ner23}.
On the one hand, the single-connected RIS is the architecture with the lowest circuit complexity since there are no interconnections among the RIS elements.
Due to its limited flexibility, the single-connected RIS can only achieve a reduced performance.
On the other hand, the fully-connected RIS has the highest circuit complexity since each RIS element is connected to all others through a tunable impedance component, enabling the highest performance.
Several BD-RIS architectures have been proposed to trade performance and complexity.
However, the fundamental limits of this trade-off are still unexplored.
To fill this gap, we investigate how to optimally trade the achievable performance and the circuit complexity in BD-RIS architectures.

The contribution of this letter is twofold.
\textit{First}, we derive the Pareto frontier for the performance-complexity trade-off offered by BD-RIS in \gls{siso} systems.
\textit{Second}, we characterize the optimal BD-RIS architectures allowing to achieve this Pareto frontier.

\section{System Model}
\label{sec:system-model}

Consider a \gls{siso} communication system aided by an $N$-element RIS.
The $N$ elements of the RIS are connected to a $N$-port reconfigurable impedance network, with scattering matrix $\boldsymbol{\Theta}\in\mathbb{C}^{N\times N}$.
Defining $x\in\mathbb{C}$ as the transmitted signal and $y\in\mathbb{C}$ as the received signal, we have $y=hx+n$, where $h\in\mathbb{C}$ is the wireless channel and $n\in\mathbb{C}$ is the \gls{awgn} at the receiver.
Assuming that the direct link between the transmitter and the receiver is negligible compared to the RIS-aided link, the channel $h$ writes as $h=\mathbf{h}_{R}\boldsymbol{\Theta}\mathbf{h}_{T}$, where $\mathbf{h}_{R}\in\mathbb{C}^{1\times N}$ and $\mathbf{h}_{T}\in\mathbb{C}^{N\times 1}$ refer to the channels from the RIS to the receiver and from the transmitter to the RIS, respectively \cite{she20}\footnote{Since $\mathbf{h}_{R}\boldsymbol{\Theta}\mathbf{h}_{T}$ can always be co-phased with the direct link, our conclusions are not impacted by the direct link.
In the case of a non-negligible direct link, the performance would merely be scaled up, depending on its strength.
Thus, we neglect the direct link to gain fundamental insights not depending on its strength.}.
We assume independent and identically distributed (i.i.d.) Rayleigh channels to obtain fundamental insights, having unit channel gains with no loss of generality, i.e., $\mathbf{h}_{R}\sim\mathcal{CN}\left(\boldsymbol{0},\mathbf{I}\right)$ and $\mathbf{h}_{T}\sim\mathcal{CN}\left(\boldsymbol{0},\mathbf{I}\right)$.

When reconfiguring a RIS, the scattering matrix $\boldsymbol{\Theta}$ is typically optimized to maximize the performance given by the received signal power
\begin{equation}
P_R=P_T\left|\mathbf{h}_{R}\boldsymbol{\Theta}\mathbf{h}_{T}\right|^2,
\end{equation}
where $P_T=\mathrm{E}[\left|x\right|^2]$ is the transmitted signal power.
Considering passive RISs with lossless and reciprocal impedance networks, the matrix $\boldsymbol{\Theta}$ is in general subject to the constraints $\boldsymbol{\Theta}^{H}\boldsymbol{\Theta}=\boldsymbol{\mathrm{I}}$ and $\boldsymbol{\Theta}=\boldsymbol{\Theta}^{T}$ \cite{poz11}.
Furthermore, additional constraints on $\boldsymbol{\Theta}$, limiting the received signal power, are present depending on the BD-RIS architecture \cite{she20,ner23}.

\section{Problem Formulation}

Conventional RIS, also known as single-connected RIS, is the least complex architecture achieving the lowest performance, given by
\begin{equation}
\bar{P}_R^{\mathrm{Single}}=P_T\left(\sum_{n=1}^N\left\vert\left[\mathbf{h}_{R}\right]_{n}\left[\mathbf{h}_{T}\right]_{n}\right\vert\right)^2,
\end{equation}
since it includes only $N$ tunable impedance components \cite{she20}.
In contrast, tree-connected RIS is proved to be the least complex architecture achieving the performance upper bound
\begin{equation}
\bar{P}_{R}^{\mathrm{Tree}}=P_T\left\Vert\mathbf{h}_{R}\right\Vert^2\left\Vert\mathbf{h}_{T}\right\Vert^2,
\end{equation}
with $2N-1$ tunable impedance components \cite{ner23}.
In this letter, our goal is to determine the maximum performance achievable by BD-RIS architectures with circuit complexity $C\in[N,2N-1]$, representing the number of tunable components\footnote{In our analysis, we preclude BD-RISs with dynamic interconnections, since they require switches and hence additional circuit complexity.}.
In other words, we want to characterize the Pareto frontier of the performance-complexity trade-off enabled by BD-RIS.
Furthermore, we are interested in which BD-RIS architectures allow us to reach such a frontier, denoted as ``optimal'' BD-RIS architectures in the following.

We begin by characterizing the maximum received signal power achievable by a given BD-RIS architecture.
To this end, we consider the modeling of BD-RIS based on graph theory developed in \cite{ner23}.
According to \cite{ner23}, each BD-RIS architecture can be described through a graph $\mathcal{G}$ capturing the presence of tunable impedance components between its RIS elements.
We denote as $G$ the number of connected components of such a graph $\mathcal{G}$, where a connected component of a graph is defined as a connected subgraph that is not part of any larger connected subgraph \cite{bon76}. Besides, $N_g\geq1$ is the number of RIS elements included in the $g$th component, with $\sum_{g=1}^GN_{g}=N$.
In agreement with previous work on BD-RIS \cite{she20}-\cite{li22-3}, we refer to the connected components of $\mathcal{G}$ as the ``groups'' of the corresponding BD-RIS architecture.
According to \cite{ner23}, the maximum received signal power obtained by the BD-RIS associated with $\mathcal{G}$ is given by
\begin{equation}
\bar{P}_R=P_T\left(\sum_{g=1}^G\left\Vert\mathbf{h}_{R,g}\right\Vert\left\Vert\mathbf{h}_{T,g}\right\Vert\right)^2,\label{eq:PR}
\end{equation}
where $\mathbf{h}_{R,g}\in\mathbb{C}^{1\times N_{g}}$ and $\mathbf{h}_{T,g}\in\mathbb{C}^{N_{g}\times1}$ contain the $N_{g}$ elements of $\mathbf{h}_{R}$ and $\mathbf{h}_{T}$ corresponding to the $N_{g}$ RIS elements included into the $g$th group, respectively.
In the case of i.i.d. fading channels, we can assume that each group includes adjacent RIS elements with no loss of generality, such that $\mathbf{h}_{R}=[\mathbf{h}_{R,1},\ldots,\mathbf{h}_{R,G}]$ and $\mathbf{h}_{T}=[\mathbf{h}_{T,1}^T,\ldots,\mathbf{h}_{T,G}^T]^{T}$.
Thus, the maximum received signal power $\bar{P}_R$ achievable by a given BD-RIS solely depends on $G$ and the group sizes $N_1,\ldots,N_G$.

To express the maximum received signal power $\bar{P}_R$ achievable with a circuit complexity $C$ as a function of $C$, we introduce the following three results.
First, we characterize the optimal BD-RIS architectures through the following lemma.
\begin{lemma}
All the optimal BD-RIS architectures have a corresponding graph being acyclic, also known as a forest.
\label{lem:forest1}
\end{lemma}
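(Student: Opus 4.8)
The plan is to separate the two quantities governing the problem and observe that they respond very differently to the graph structure. Equation \eqref{eq:PR} shows that the maximum received signal power $\bar{P}_R$ depends on the graph $\mathcal{G}$ \emph{only} through its decomposition into connected components, i.e., only through the number of groups $G$ and the group sizes $N_1,\ldots,N_G$. The circuit complexity $C$, on the other hand, counts tunable components: one diagonal component per RIS element plus one per interconnection, i.e., one per edge of $\mathcal{G}$. This gives $C=N+|\mathcal{E}|$, where $|\mathcal{E}|$ is the number of edges, which is consistent with the single-connected architecture ($C=N$ with no edges) and the tree-connected architecture ($C=2N-1$ with $N-1$ edges). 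The upshot I would exploit is that performance is a function of the vertex partition alone, while complexity grows strictly with the edge count.

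With this in hand, I would argue by contradiction. Recall that an architecture is optimal (Pareto-optimal) when no other architecture attains the same $\bar{P}_R$ with strictly smaller $C$. Suppose an optimal architecture had a graph $\mathcal{G}$ containing a cycle, and let $e$ be any edge lying on that cycle. I would form $\mathcal{G}'$ by deleting $e$ and compare the two architectures.

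The key step, and the one I expect to be the crux, is verifying that $\mathcal{G}'$ has exactly the same connected components as $\mathcal{G}$. This rests on the standard graph-theoretic fact that an edge is a bridge if and only if it lies on no cycle: since $e$ lies on a cycle, its two endpoints remain joined by the rest of that cycle after deletion, so the partition of the vertices into connected components is unchanged. Consequently $G$ and every group size $N_1,\ldots,N_G$ are preserved, and by \eqref{eq:PR} the achievable power $\bar{P}_R$ is identical for $\mathcal{G}'$ and $\mathcal{G}$. Yet $\mathcal{G}'$ has one fewer edge, so its complexity is $C-1<C$. Thus $\mathcal{G}'$ matches the performance of $\mathcal{G}$ with strictly lower complexity, contradicting the optimality of $\mathcal{G}$.

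I would then conclude that every optimal architecture must be acyclic, i.e., a forest, which is the claim of Lemma~\ref{lem:forest1}. The performance invariance is immediate from \eqref{eq:PR} once the connected components are shown to be preserved, so the only delicate points are the clean identification $C=N+|\mathcal{E}|$ and the bridge/cycle characterization used in the edge-removal step. As a side benefit, iterating this argument shows that for any fixed partition the minimum complexity is attained exactly when each group is a spanning tree, which is precisely the structural fact I would expect the subsequent derivation of the Pareto frontier to build upon.
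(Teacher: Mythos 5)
Your proof is correct and follows essentially the same route as the paper's Appendix~A: remove an edge from a cycle, note via \eqref{eq:PR} that the connected components (hence $G$ and $N_1,\ldots,N_G$, hence $\bar{P}_R$) are unchanged, while the complexity drops to $C-1$, contradicting optimality. Your additional details---the explicit identification $C=N+|\mathcal{E}|$ and the bridge/cycle characterization---merely make explicit steps the paper leaves implicit.
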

\begin{proof}
Please refer to Appendix~A.
\end{proof}
In other words, a BD-RIS architecture can be optimal only if its graph does not contain any cycle, i.e., a finite sequence of distinct edges joining a sequence of vertices, where only the first and last vertices are equal \cite{bon76}.
Second, we use the following result from graph theory \cite{bon76}.
\begin{lemma}
If a graph $\mathcal{G}$ is a forest, then it has $G=N-L$ connected components, where $N$ is the number of vertices and $L$ is the number of edges.
\label{lem:forest2}
\end{lemma}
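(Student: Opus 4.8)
The plan is to proceed by induction on the number of edges $L$, tracking how each edge deletion changes the count of connected components. The base case $L=0$ is immediate: a graph with $N$ vertices and no edges has every vertex as an isolated connected component, so $G=N=N-L$ holds trivially. An equivalent, equally short route would be to invoke the standard decomposition of a forest into trees and the classical fact that a tree on $n$ vertices has exactly $n-1$ edges, then sum $L=\sum_{g=1}^{G}(N_g-1)=N-G$ over the components; I would use the induction below because it is self-contained and directly yields the component count.

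For the inductive step, suppose the claim holds for every forest with $L-1$ edges, and let $\mathcal{G}$ be a forest with $N$ vertices and $L$ edges. First I would pick any edge $e$ and remove it, obtaining a graph $\mathcal{G}'$ that is still a forest, since deleting an edge cannot create a cycle, with $N$ vertices and $L-1$ edges. The crucial observation is that removing $e$ raises the number of connected components by exactly one: because $\mathcal{G}$ is acyclic, $e$ is the unique path between its two endpoints within its component, so deleting it splits that single component into two while leaving all other components untouched.

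Applying the induction hypothesis to $\mathcal{G}'$ shows that its number of components equals $N-(L-1)=N-L+1$. Since removing $e$ increased the component count by one, the original graph $\mathcal{G}$ has $(N-L+1)-1=N-L$ connected components, which establishes $G=N-L$ and closes the induction.

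The main obstacle, mild as it is for such a classical fact, is rigorously justifying that deleting an edge of a forest always increases the number of components by exactly one. This step is where acyclicity is genuinely used: if the two endpoints of $e$ remained connected after its removal, there would be an alternative path between them, and this path together with $e$ would form a cycle, contradicting the forest assumption. Hence the endpoints necessarily land in distinct components after deletion, confirming the unit increase on which the entire argument rests.
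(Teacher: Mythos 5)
Your proof is correct, but it takes a different route from the paper. The paper's argument is the one you sketch as your ``alternative'': it decomposes the forest into its $G$ connected components, invokes the classical fact that each component, being a tree on $N_g$ vertices, has exactly $N_g-1$ edges, and sums $L=\sum_{g=1}^{G}(N_g-1)=N-G$. Your induction on $L$ instead proves the statement from scratch: the base case $L=0$ gives $G=N$, and the inductive step hinges on the observation that deleting an edge $e$ of a forest increases the component count by exactly one, which you correctly justify via acyclicity --- if the endpoints of $e$ stayed connected after deletion, the surviving path together with $e$ would form a cycle. This is precisely where the forest hypothesis is used, and you identify it. What each approach buys: yours is fully self-contained and does not presuppose the tree edge-count lemma (in effect you reprove it along the way, since a single tree is the case $G=1$); the paper's is shorter and arguably more transparent, as it makes the per-component structure $N_g-1$ explicit, which is also the form reused implicitly elsewhere in the paper's complexity accounting. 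One minor point of care in your version, which you handle adequately but could state explicitly: edge deletion splits only the component containing $e$, and into at most two pieces, because any vertex of that component has a path to an endpoint of $e$ whose prefix avoids $e$; combined with your cycle argument this pins the increase at exactly one.
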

\begin{proof}
Please refer to Appendix~B.
\end{proof}
Third, by using Lemma~\ref{lem:forest1} and Lemma~\ref{lem:forest2}, we can derive the following proposition.
\begin{proposition}
An optimal BD-RIS architecture with $N$ elements and circuit complexity $C$, with $C\in[N,2N-1]$, has a corresponding graph with $G=2N-C$ connected components.
\label{pro:groups}
\end{proposition}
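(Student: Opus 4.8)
The plan is to recast the quantities in the proposition into graph-theoretic terms and then chain together the two preceding lemmas. The central ingredient I would establish first is the identity relating the circuit complexity $C$ to the number of edges $L$ in the graph $\mathcal{G}$ associated with the architecture. Following the model of \cite{ner23}, each of the $N$ RIS elements carries one tunable impedance component to ground, present in every architecture, while each interconnection between a pair of distinct elements, i.e., each edge of $\mathcal{G}$, adds exactly one further tunable impedance component. Counting these contributions gives
\begin{equation}
C=N+L. \label{eq:CNL}
\end{equation}
I would sanity-check \eqref{eq:CNL} against the two extremes: the single-connected RIS has no interconnections, so $L=0$ and $C=N$, whereas the tree-connected RIS has $L=N-1$ edges, so $C=2N-1$, recovering the endpoints of the admissible interval $[N,2N-1]$.

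With \eqref{eq:CNL} in hand, the remainder is a short deduction. Since the architecture is optimal, Lemma~\ref{lem:forest1} guarantees that $\mathcal{G}$ is acyclic, i.e., a forest, so Lemma~\ref{lem:forest2} applies and gives $G=N-L$. Substituting $L=C-N$ from \eqref{eq:CNL} then yields
\begin{equation}
G=N-L=N-\left(C-N\right)=2N-C,
\end{equation}
as claimed. As a final consistency check, a forest on $N$ vertices has between $1$ and $N$ connected components, i.e., $G\in[1,N]$, which under $G=2N-C$ corresponds precisely to $C\in[N,2N-1]$.

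The step I expect to be the crux is the complexity identity \eqref{eq:CNL}: everything else is a direct substitution, but this identity is what bridges the physical notion of circuit complexity with the combinatorial edge count. It is also where the restriction to optimal architectures becomes essential, since the relation $G=N-L$ of Lemma~\ref{lem:forest2} holds only for forests and would fail for graphs containing cycles.
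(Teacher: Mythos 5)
Your proof is correct and takes essentially the same route as the paper's Appendix~C: both rest on the identity $L=C-N$ (the $N$ grounded tunable components plus one component per edge) and then chain Lemma~\ref{lem:forest1} (optimality implies the graph is a forest) with Lemma~\ref{lem:forest2} ($G=N-L$) to conclude $G=2N-C$. Your endpoint and range sanity checks are extra polish beyond what the paper includes, but the core argument is identical.
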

\begin{proof}
Please refer to Appendix~C.
\end{proof}
According to Proposition~\ref{pro:groups}, given a circuit complexity $C$, the number of groups $G$ in the corresponding optimal BD-RIS is fixed.
Thus, our problem is to find the group sizes $N_1,\ldots,N_G$ of the BD-RIS architecture that maximize the performance $\mathrm{E}\left[\bar{P}_R\right]$, with fixed $G$.
The corresponding optimization problem is given by
\begin{align}
\underset{N_1,\ldots,N_G}{\mathsf{\mathrm{max}}}\;\;
&\mathrm{E}\left[\bar{P}_R\right]\label{eq:p1-obj}\\
\mathsf{\mathrm{s.t.}}\;\;\;
&N_g\geq1,\:\forall g,\:\:\sum_{g=1}^GN_{g}=N,\label{eq:p1-c}
\end{align}
where $G=2N-C$ is fixed depending on the complexity $C$.

\section{Pareto Frontier}

We now solve problem \eqref{eq:p1-obj}-\eqref{eq:p1-c} by rewriting and simplifying the objective \eqref{eq:p1-obj}, then we use the obtained optimal group sizes $N_1,\ldots,N_G$ to derive the desired Pareto frontier.
To solve problem \eqref{eq:p1-obj}-\eqref{eq:p1-c}, we assume $P_T=1$ with no loss of generality and we write the average received signal power \eqref{eq:PR} as
\begin{multline}
\mathrm{E}\left[\bar{P}_R\right]=\sum_{g=1}^G\mathrm{E}\left[\left\Vert\mathbf{h}_{R,g}\right\Vert^2\right]^2\\
+\sum_{g_1\neq g_2}\mathrm{E}\left[\left\Vert\mathbf{h}_{R,g_1}\right\Vert\right]^2\mathrm{E}\left[\left\Vert\mathbf{h}_{R,g_2}\right\Vert\right]^2,
\end{multline}
where we exploited the i.i.d. channels assumption and the fact that $\mathbf{h}_{R}$ and $\mathbf{h}_{T}$ are identically distributed.
Using the moments of the chi distribution with $2N_g$ degrees of freedom, we have that $\mathrm{E}[\Vert\mathbf{h}_{R,g}\Vert]=\Gamma(N_g+1/2)/\Gamma(N_g)$ and $\mathrm{E}[\Vert\mathbf{h}_{R,g}\Vert^2]=N_g$, $\forall g$, where $\Gamma(\cdot)$ is the gamma function.
Thus, we can write
\begin{multline}
\mathrm{E}\left[\bar{P}_R\right]=\sum_{g=1}^GN_g^2\\
+\sum_{g_1\neq g_2}\left(\frac{\Gamma\left(N_{g_1}+1/2\right)}{\Gamma\left(N_{g_1}\right)}\right)^2\left(\frac{\Gamma\left(N_{g_2}+1/2\right)}{\Gamma\left(N_{g_2}\right)}\right)^2.\label{eq:EPR1}
\end{multline}


The expression of $\mathrm{E}\left[\bar{P}_R\right]$ in \eqref{eq:EPR1} can be now simplified by using the relationship
\begin{equation}
\left(\frac{\Gamma\left(M+1/2\right)}{\Gamma\left(M\right)}\right)^2=M-\frac{1}{4}+\frac{1}{32M}+\mathcal{O}\left(\left(\frac{1}{M}\right)^2\right),
\end{equation}
given by the Laurent series expansion at $M=\infty$ \cite{ahl53}.
Remarkably, the function $(\Gamma(M+1/2)/\Gamma(M))^2$ is well approximated by $M-1/4+1/(32M)$ for any positive integer $M$, despite the series being computed at $M=\infty$.
Thus, we can approximate \eqref{eq:EPR1} as
\begin{multline}
\mathrm{E}\left[\bar{P}_R\right]=\sum_{g=1}^GN_g^2\\
+\sum_{g_1\neq g_2}\left(N_{g_1}-\frac{1}{4}+\frac{1}{32N_{g_1}}\right)\left(N_{g_2}-\frac{1}{4}+\frac{1}{32N_{g_2}}\right).
\end{multline}
By developing the product and reorganizing the terms, we get
\begin{multline}
\mathrm{E}\left[\bar{P}_R\right]=\sum_{g=1}^GN_g^2+\sum_{g_1\neq g_2}\left(N_{g_1}N_{g_2}+\frac{1}{16}-\frac{N_{g_1}}{4}-\frac{N_{g_2}}{4}\right.\\
+\left.\frac{N_{g_1}}{32N_{g_2}}+\frac{N_{g_2}}{32N_{g_1}}-\frac{1}{128N_{g_1}}-\frac{1}{128N_{g_2}}+\frac{1}{1024N_{g_1}N_{g_2}}\right).\label{eq:EPR2}
\end{multline}
Recalling that $N_g\geq1$, $\forall g$, the term $1/(1024N_{g_1}N_{g_2})$ in \eqref{eq:EPR2} is negligible since it is at least 1024 times smaller than the term $N_{g_1}N_{g_2}$.
Thus, omitting $1/(1024N_{g_1}N_{g_2})$ and completing the computations, we obtain
\begin{multline}
\mathrm{E}\left[\bar{P}_R\right]=N^2+\frac{G\left(G-1\right)}{16}-\frac{N\left(G-1\right)}{2}\\
+\sum_{g_1\neq g_2}\left(\frac{N_{g_1}}{32N_{g_2}}+\frac{N_{g_2}}{32N_{g_1}}-\frac{1}{128N_{g_1}}-\frac{1}{128N_{g_2}}\right).
\end{multline}
Observing that
\begin{gather}
\sum_{g_1\neq g_2}\frac{N_{g_1}}{32N_{g_2}}=\sum_{g_1\neq g_2}\frac{N_{g_2}}{32N_{g_1}}=\sum_{g=1}^G\frac{N-N_{g}}{32N_{g}},\\
\sum_{g_1\neq g_2}\frac{1}{128N_{g_1}}=\sum_{g_1\neq g_2}\frac{1}{128N_{g_2}}=\sum_{g=1}^G\frac{G-1}{128N_{g}},
\end{gather}
we can eventually rewrite $\mathrm{E}\left[\bar{P}_R\right]$ as
\begin{multline}
\mathrm{E}\left[\bar{P}_R\right]=N^2+\frac{G-1}{16}\left(G-8N\right)\\
-\frac{G}{16}+\frac{4N-G+1}{64}\sum_{g=1}^G\frac{1}{N_{g}}.\label{eq:EPR3}
\end{multline}

\begin{figure*}[t]
\centering
\includegraphics[width=0.40\textwidth]{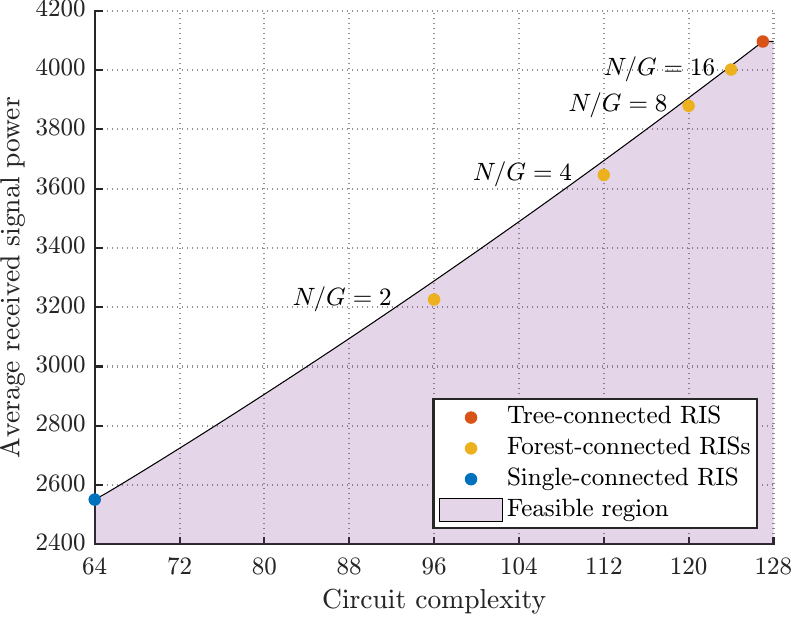}
\includegraphics[width=0.40\textwidth]{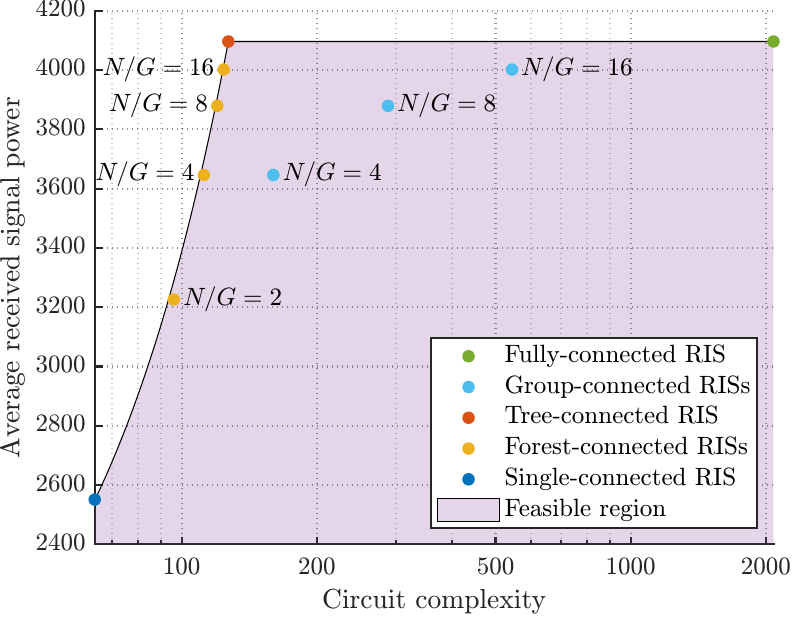}
\caption{Pareto frontier for the performance-complexity trade-off achieved by BD-RISs, with $N=64$.}
\label{fig:pareto}
\end{figure*}

We notice from \eqref{eq:EPR3} that maximizing $\mathrm{E}\left[\bar{P}_R\right]$ is equivalent to maximize $\sum_{g=1}^G1/N_{g}$. Thus, problem \eqref{eq:p1-obj}-\eqref{eq:p1-c} can be equivalently expressed as
\begin{align}
\underset{N_1,\ldots,N_G}{\mathsf{\mathrm{max}}}\;\;
&\sum_{g=1}^G\frac{1}{N_{g}}\label{eq:p2-obj}\\
\mathsf{\mathrm{s.t.}}\;\;\;
&N_g\geq1,\:\forall g,\:\:\sum_{g=1}^GN_{g}=N,\label{eq:p2-c}
\end{align}
which is solved in the following proposition.
\begin{proposition}
The solution to problem \eqref{eq:p2-obj}-\eqref{eq:p2-c} is given by
\begin{gather}
N_1=N_2=\ldots=N_{G-1}=1,\label{eq:opt-NG1}\\
N_G=N-G+1,\label{eq:opt-NGG}
\end{gather}
up to a permutation of the group sizes.
\label{pro:solution}
\end{proposition}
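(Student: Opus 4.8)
The plan is to exploit the \emph{convexity} of the objective in \eqref{eq:p2-obj} together with the polytope structure of the feasible set in \eqref{eq:p2-c}. First I would note that the feasible region $S=\{(N_1,\ldots,N_G):N_g\geq1\ \forall g,\ \sum_{g=1}^GN_g=N\}$ is a compact convex polytope, and that the objective $f(N_1,\ldots,N_G)=\sum_{g=1}^G1/N_g$ is strictly convex on $S$, being a sum of the strictly convex univariate functions $1/N_g$ acting on separate variables. Since a strictly convex function on a compact convex polytope attains its maximum only at the extreme points (vertices) of that polytope, the maximizers of \eqref{eq:p2-obj}-\eqref{eq:p2-c} must be vertices of $S$.

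The second step is to characterize these vertices. A vertex of a polytope living in the hyperplane $\sum_{g}N_g=N$ of $\mathbb{R}^G$ is pinned down by having $G-1$ of the bound constraints $N_g\geq1$ active; setting $G-1$ of the group sizes to the lower bound $1$ then forces the remaining one to $N-(G-1)=N-G+1$ through the sum constraint. Hence the vertices of $S$ are exactly the points in which $G-1$ coordinates equal $1$ and one coordinate equals $N-G+1$, i.e., the configurations \eqref{eq:opt-NG1}-\eqref{eq:opt-NGG} up to a permutation of the indices. Because $f$ is symmetric, all these vertices yield the identical value $(G-1)+1/(N-G+1)$, so the maximum is attained at precisely these points, which proves the proposition. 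I would also remark that, since every vertex of $S$ has integer coordinates, the conclusion is unchanged if the group sizes are restricted to be positive integers, so no separate integrality argument is needed.

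The step I expect to be the main pitfall, rather than the main technical difficulty, is resisting the temptation to locate the optimum via a first-order (Lagrangian/KKT) stationarity condition: the unique interior stationary point of $f$ on $S$ is $N_1=\cdots=N_G=N/G$, which is the \emph{minimum} of the convex objective, not the maximum we seek. The whole point is therefore to argue that the maximum of a convex function sits on the boundary --- at a vertex --- and this is what the convexity/extreme-point reasoning supplies. As an elementary alternative that avoids invoking the extreme-point theorem, I could instead use a pairwise exchange argument: for any two groups of sizes $a,b$ with $a+b$ fixed, $1/a+1/b$ is maximized when the pair is as unbalanced as possible, i.e., the pair $(1,a+b-1)$; repeatedly unbalancing any pair with both sizes at least $2$ strictly increases the objective and, since each such move strictly decreases the number of groups of size larger than $1$, the process terminates after finitely many steps at the configuration \eqref{eq:opt-NG1}-\eqref{eq:opt-NGG}, establishing it as the unique global maximizer up to permutation.
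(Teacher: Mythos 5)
Your proposal is correct, and it takes a genuinely different route from the paper. The paper proves Proposition~\ref{pro:solution} by induction on the number of groups $G$: the base case $G=2$ reduces $\max\,(1/N_1+1/N_2)$ under $N_1+N_2=N$ to $\min N_1N_2$, solved by the unbalanced pair $(1,N-1)$, and the induction step fixes $N_G$, invokes the hypothesis for the remaining $G-1$ groups, and again lands on a two-variable product-minimization. Your primary argument instead relaxes to real-valued $N_g$ and applies the extreme-point principle: $\sum_g 1/N_g$ is strictly convex on the compact polytope $\{N_g\geq1,\ \sum_g N_g=N\}$, so every maximizer is a vertex, and the vertices are exactly the permutations of $(1,\ldots,1,N-G+1)$; your observation that these vertices are integral disposes of the integrality question in one line, something the paper handles only implicitly. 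This buys a one-shot, non-inductive proof that also yields uniqueness (up to permutation) immediately from strict convexity, and it generalizes verbatim to any strictly convex per-group cost, whereas the paper's argument is tied to the specific $1/N_g$ form via the product reformulation. Your fallback pairwise-exchange argument is, in spirit, close to the paper's induction --- both ultimately rest on the two-variable fact that $1/a+1/b=(a+b)/(ab)$ is maximized by maximally unbalancing a pair with fixed sum --- but your termination argument (the number of groups of size exceeding $1$ strictly decreases at each move) is cleaner than carrying an induction hypothesis. Your warning about the KKT pitfall is also well taken: the interior stationary point $N_g=N/G$ is the \emph{minimizer}, which is precisely why forest-connected RISs with equal group sizes fall slightly below the Pareto frontier in the paper's numerical results.
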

\begin{proof}
Please refer to Appendix~D.
\end{proof}
Given the optimal group sizes $N_1,\ldots,N_{G}$ provided by Proposition~\ref{pro:solution}, we can derive in closed form the expression of the desired Pareto frontier.
Specifically, plugging \eqref{eq:opt-NG1} and \eqref{eq:opt-NGG} into \eqref{eq:EPR1}, we obtain
\begin{multline}
\mathrm{E}\left[\bar{P}_R\right]=G-1+\left(N-G+1\right)^2+\left(G-1\right)\left(G-2\right)\Gamma\left(3/2\right)^4\\
+2\left(G-1\right)\left(\frac{\Gamma\left(N-G+3/2\right)\Gamma\left(3/2\right)}{\Gamma\left(N-G+1\right)}\right)^2,
\end{multline}
giving the maximum performance achievable with a BD-RIS architecture having $G$ groups.
Finally, recalling that $G=2N-C$, the expression of the maximum performance achievable with a circuit complexity $C\in[N,2N-1]$ is given by
\begin{multline}
\mathrm{E}\left[\bar{P}_R\right]=\left(C-N\right)^2+C\\
+\left(2N-C-1\right)\left(2N-C-2\right)\Gamma\left(3/2\right)^4\\
+2\left(2N-C-1\right)\left(\frac{\Gamma\left(C-N+3/2\right)\Gamma\left(3/2\right)}{\Gamma\left(C-N+1\right)}\right)^2,\label{eq:pareto}
\end{multline}
representing the Pareto frontier of the performance-complexity trade-off offered by BD-RISs.

\section{Numerical Results}

In Fig.~\ref{fig:pareto}, we report the Pareto frontier given by \eqref{eq:pareto}, delimiting the region of feasible BD-RIS architectures\footnote{Note that the average received signal power has no units since is computed with unit transmit power and unit channel gains with no loss of generality.}.
Specifically, the feasible region is delimited by \eqref{eq:pareto} when $C\in[N,2N-1]$ since \eqref{eq:pareto} gives the maximum performance achievable with complexity $C$, and by the horizontal line $\mathrm{E}[\bar{P}_R]=N^2$ when $C>2N-1$ since $N^2$ is the performance upper bound \cite{she20}.
We fix $N=64$ in Fig.~\ref{fig:pareto} to obtain a fair comparison in terms of the space occupied by the RIS.

The derived frontier is compared with the performance-complexity trade-off achieved by the BD-RIS architectures recently proposed in \cite{she20,ner23}.
Since each BD-RIS is characterized by its circuit complexity $C$ and
average received signal power $\mathrm{E}[\bar{P}_R]$, each BD-RIS is represented as a point in Fig.~\ref{fig:pareto} with coordinates $(C,\mathrm{E}[\bar{P}_R])$.
More precisely, we report group- and forest-connected RISs, both achieving a performance
\begin{equation}
\mathrm{E}\left[\bar{P}_R^{\mathrm{Group}}\right]=\frac{N^2}{G}+G\left(G-1\right)\left(\frac{\Gamma\left(N/G+1/2\right)}{\Gamma\left(N/G\right)}\right)^4,\label{eq:EPR-group}
\end{equation}
and with complexity $C^{\mathrm{Group}}=N(N/G+1)/2$ and $C^{\mathrm{Forest}}=2N-G$, respectively, depending on the group size $N/G$ \cite{she20,ner23}.
Note that \eqref{eq:EPR-group} is obtained by setting $N_g=N/G$, $\forall g$, in \eqref{eq:EPR1}.
The four forest-connected RISs in Fig.~\ref{fig:pareto} have group sizes 2, 4, 8, and 16, while the three group-connected RISs have group sizes 4, 8, and 16 since forest- and group-connected RISs with group sizes 2 are equivalent \cite{ner23}.
Besides, we report fully- and tree-connected RISs, both achieving 
\begin{equation}
\mathrm{E}\left[\bar{P}_R^{\mathrm{Fully}}\right]=N^2,\label{eq:EPR-fully}
\end{equation}
and with complexity $C^{\mathrm{Fully}}=N(N+1)/2$ and $C^{\mathrm{Tree}}=2N-1$, respectively \cite{she20,ner23}, where \eqref{eq:EPR-fully} is derived by setting $G=1$ in \eqref{eq:EPR-group}.
Finally, we report the single-connected RIS architecture, achieving a performance given by
\begin{equation}
\mathrm{E}\left[\bar{P}_R^{\mathrm{Single}}\right]=N+N\left(N-1\right)\Gamma\left(3/2\right)^4,\label{eq:EPR-single}
\end{equation}
and with complexity $C^{\mathrm{Single}}=N$ \cite{she20}, where \eqref{eq:EPR-single} is derived by setting $G=N$ in \eqref{eq:EPR-group}.

We make the following remarks.
\textit{First}, on the one hand, the single-connected RIS is the least complex architecture, achieving the lowest performance due to its limited architecture.
On the other hand, the tree-connected RIS allows us to reach the performance upper bound, with the lowest possible complexity.
\textit{Second}, forest-connected RISs approach the Pareto frontier, but they are slightly suboptimal.
This is because forest-connected RISs have equally sized groups, i.e., they all have group size $N/G$.
However, the optimal group sizes are not all equal, as given by \eqref{eq:opt-NG1}-\eqref{eq:opt-NGG}.
\textit{Third}, the fully-connected (resp. group-connected) RIS achieves the same performance as the tree-connected (resp. forest-connected) RIS, but with higher circuit complexity.
Thus, in \gls{siso} systems, fully- and group-connected RISs are highly suboptimal.
Note that the exact shape of the Pareto frontier depends on the channel distribution.
Specifically, with Rician or correlated channels, the gain of the tree-connected over the single-connected RIS decreases, and less complex BD-RIS architectures are expected to approach the performance upper bound \cite{she20,ner23}.

\section{Conclusion}

We derive the Pareto frontier for the performance-complexity trade-off in BD-RISs.
This frontier provides the BD-RIS architectures that can be optimally used to bridge between the single-connected RIS and the tree-connected RIS.
The presented fundamental results are expected to drive the development of novel BD-RIS architectures and prototypes.
Remarkably, a multi-dimensional Pareto frontier could be derived accounting for multiple parameters in addition to the circuit complexity, such as the number of RIS elements, the channel estimation overhead, and the optimization complexity, thus representing a future research direction.

\section*{Appendix}

\subsection{Proof of Lemma~\ref{lem:forest1}}

To prove Lemma~\ref{lem:forest1}, we show that any BD-RIS whose corresponding graph is not a forest, is not optimal.
Consider a BD-RIS with circuit complexity $C$ with a corresponding graph $\mathcal{G}$ that is not a forest, i.e., it has at least one cycle \cite{bon76}.
According to \eqref{eq:PR}, the received signal power achievable by a BD-RIS solely depends on its number of groups $G$, and the group sizes $N_1,\ldots,N_G$.
Thus, by removing one edge from a cycle in $\mathcal{G}$, the resulting BD-RIS has complexity $C-1$ and achieves the same performance as the original one since its graph still has $G$ connected components with sizes $N_1,\ldots,N_G$.
Since this resulting BD-RIS achieves the same performance as the original one with reduced complexity, the original BD-RIS is not optimal, and Lemma~\ref{lem:forest1} is proved.

\subsection{Proof of Lemma~\ref{lem:forest2}}

Consider a forest $\mathcal{G}$ with $G$ connected components, where the $g$th component has $N_g$ vertices, with $\sum_{g=1}^GN_g=N$.
Since $\mathcal{G}$ is a forest, each component is a tree, i.e., is a connected forest, and the $g$th component includes $N_g-1$ edges \cite[Theorem 2.2]{bon76}.
Thus, the number of edges in $\mathcal{G}$ is
\begin{equation}
L=\sum_{g=1}^G\left(N_g-1\right)=\sum_{g=1}^GN_g-G=N-G,
\end{equation}
proving that $G=N-L$.

\subsection{Proof of Proposition~\ref{pro:groups}}

According to \cite{ner23}, a BD-RIS with $N$ elements and $C$ tunable impedance components has a corresponding graph with $L=C-N$ edges since $N$ tunable impedance components are used to connect each RIS element to ground.
Besides, we know from Lemma~\ref{lem:forest1} and Lemma~\ref{lem:forest2} that the graph of an optimal BD-RIS with $N$ elements and complexity $C$ is a forest, having $G=N-L$ connected components.
By plugging $L=C-N$ into $G=N-L$, we obtain $G=2N-C$.

\subsection{Proof of Proposition~\ref{pro:solution}}

The proof is conducted by induction on the number of groups $G$.
As the base case, we consider $G=2$, where problem \eqref{eq:p2-obj}-\eqref{eq:p2-c} boils down to
\begin{align}
\underset{N_1,N_2}{\mathsf{\mathrm{min}}}\;\;
&N_1N_2\label{eq:obj}\\
\mathsf{\mathrm{s.t.}}\;\;\;
&N_1\geq1,\:N_2\geq1,\:N_1+N_2=N.
\end{align}
The solution to this problem is clearly given by $N_1=1$ and $N_2=N-1$, or vice versa.
The proposition is consequently verified for the case $G=2$.

As the induction step, we prove that if the proposition is valid for $G-1$ groups, it also holds for $G$ groups.
To this end, we rewrite problem \eqref{eq:p2-obj}-\eqref{eq:p2-c} as
\begin{align}
\underset{N_1,\ldots,N_G}{\mathsf{\mathrm{max}}}\;\;
&\sum_{g=1}^{G-1}\frac{1}{N_{g}}+\frac{1}{N_{G}}\label{eq:p3-obj}\\
\mathsf{\mathrm{s.t.}}\;\;\;
&N_g\geq1,\:\forall g,\:\:\sum_{g=1}^{G-1}N_{g}=N-N_G.\label{eq:p3-c}
\end{align}
By the induction hypothesis, we have
\begin{gather}
N_1=N_2=\ldots=N_{G-2}=1,\label{eq:NG12}\\
N_{G-1}=N-N_G-G+2.\label{eq:NGG2}
\end{gather}
Using \eqref{eq:NG12} and \eqref{eq:NGG2}, problem \eqref{eq:p3-obj}-\eqref{eq:p3-c} can be simplified as
\begin{align}
\underset{N_{G-1},N_G}{\mathsf{\mathrm{min}}}\;\;
&N_{G-1}N_G\\
\mathsf{\mathrm{s.t.}}\;\;\;
&N_{G-1}\geq1,\:N_G\geq1,\\
&N_{G-1}+N_G=N-G+2,
\end{align}
where the only unknown are $N_{G-1}$ and $N_G$.
By solving this problem as done for the base case, we obtain $N_{G-1}=1$ and $N_G=N-G+1$, proving the induction step.

\bibliographystyle{IEEEtran}
\bibliography{IEEEabrv,main}
\end{document}